\documentclass[conference,letterpaper]{IEEEtran}

\usepackage{geometry}
\usepackage[utf8]{inputenc}
\usepackage[T1]{fontenc}
\usepackage{url}
\usepackage{ifthen}
\usepackage{cite}
\usepackage[cmex10]{amsmath}
\usepackage{tikz}
\usepackage{amssymb}
\usepackage{amsthm,mathtools}

\usetikzlibrary{arrows.meta, positioning, calc, shapes.multipart}
\newtheorem{theorem}{Theorem}
\newtheorem{lemma}{Lemma}
\newtheorem{proposition}{Proposition}
\newtheorem{definition}{Definition}

\definecolor{grondblue}{RGB}{41, 128, 185}
\definecolor{osdorange}{RGB}{211, 84, 0}
\definecolor{nodebg}{RGB}{248, 249, 250}

\begin{document}
\title{Best-First Ordered Statistics Decoding \\ of Quantum LDPC Codes}

\author{
  \IEEEauthorblockN{Michele Banfi$^{*\dagger}$, Marco Ferrari$^{\dagger}$, Antonino Favano$^{*}$, Alberto Tarable$^{\dagger}$, and Luca Barletta$^{*}$}
  $^{\dagger}$Consiglio Nazionale delle
  Ricerche, Italy. E-mail: $\{$alberto.tarable, marcopietro.ferrari$\}$@cnr.it\\
  $^{*}$Politecnico di Milano, Milan, 20133, Italy. E-mail:  $\{$michele.banfi, antonino.favano, luca.barletta$\}$@polimi.it
}

\maketitle

\begin{abstract}
  Belief Propagation (BP) followed by Ordered Statistics Decoding (OSD) has emerged as the gold standard for decoding quantum low-density parity-check (QLDPC) codes. Recent advancements in this field have proposed new methods and algorithms to lower the complexity of this standard pipeline. Because of code degeneracy, and more in general because multiple distinct error patterns can produce the same syndrome, OSD is inherently a list-decoding technique; that is, it enumerates a set of syndrome-consistent candidates and returns the most probable one. In this work, we propose a variant of OSD, which we call Best-First OSD (BF-OSD), that explores the error-candidate space more efficiently by traversing it in order of decreasing likelihood, rather than by brute-force enumeration of a pre-selected subset. In addition, we depart from the conventional BP\,+\,OSD cascade: instead of conditioning the OSD invocation on BP convergence, we invoke OSD after a fixed, small number of BP iterations. This design choice is motivated by the full circuit-level noise regime, in which BP is particularly unreliable. Monte Carlo simulations of a family of Bivariate Bicycle (BB) codes under full circuit-level noise show that BF-OSD matches the performance of the BP\,+\,OSD baseline while exploring the solution space with 1/100th of the query budget.
\end{abstract}

\section{Introduction}

Recent work on Quantum Error Correcting Codes (QECC) \cite{vasic2025quantum} has highlighted the promising family of Quantum Low-Density Parity-Check Codes (QLDPC). QLDPC codes typically exploit the Calderbank--Shor--Steane (CSS) code construction \cite{calderbank1996good, steane1996multiple}. Because of the sparsity of their parity-check matrices, QLDPCs are appealing from a hardware perspective, as sparsity implies fewer and lower-weight connections between physical qubits. Particular attention has been given to one such construction, the Bivariate Bicycle (BB) codes \cite{bravyi2024high}, which allow for a practical physical layout on superconducting hardware.

The gold standard for decoding QLDPC codes combines two algorithms in cascade: Belief Propagation (BP) and Ordered Statistics Decoding (OSD) \cite{panteleev2021degenerate}. BP is run first, and OSD is invoked only if BP fails to converge to a valid syndrome-satisfying solution~\cite{roffe2020decoding}. In this setup, BP acts as a fast first \emph{guess} of the error pattern, while OSD serves as a fallback that is guaranteed to return a syndrome-valid solution. Two practical observations motivate our work. First, well-known variants of OSD used in this cascade, OSD-$w$ and OSD-CS \cite{panteleev2021degenerate, roffe2020decoding}, explore the solution space inefficiently, enumerating a pre-selected subset of low-reliability bit-flip combinations rather than traversing candidates in order of likelihood. Second, conditional invocation of OSD makes the overall decoder inefficient in regimes where BP often fails to converge due to a numerous presence of length-4 cycles, such as under the full circuit-level noise model considered in this work.

We therefore propose two intertwined changes to the standard decoding flow: (i) a new variant of OSD, Best-First OSD (BF-OSD), which traverses the affine coset of syndrome-consistent solutions in order of increasing cost via a best-first search over XOR combinations of null-space generators. Under a fixed query budget $Q$, BF-OSD achieves a tighter approximation to Maximum-Likelihood (ML) decoding than OSD-$w$ or OSD-CS at comparable cost, and is naturally anytime; that is, its solution improves as more queries are performed; and (ii) rather than gating OSD on BP outcome, we run BP for only a small number of iterations, enough to propagate local information across the Tanner graph and produce non-uniform soft outputs, and then always invoke BF-OSD. In support of this approach, we also replace the \emph{flooding} BP schedule with a \emph{serial} update scheme, which propagates information faster without increasing decoder complexity  \cite{ShaLit}.

These two design choices are coupled, reducing the number of BP iterations risks yielding posterior estimates that are far from convergence. To ensure we exploit all available soft information in this potentially degraded regime, we were motivated to investigate alternative OSD column-ordering conventions (Section~\ref{sec:ordering}) alongside the coset search carried out by BF-OSD (Section~\ref{sec:grond}). We evaluate the resulting decoding pipeline on a family of BB codes under the full circuit-level noise model.


\section{Noise Model}
\label{sec:noise}

As in \cite{bravyi2024high}, we adopt the \emph{full circuit-level} noise model, which captures realistic failure modes introduced by the quantum operations themselves (CNOTs and single-qubit gates, idling, state preparation and measurement), rather than assuming an ideal data-qubit depolarising channel. Crucially, this model also accounts for error propagation. For example, an $X$ error occurring on the control qubit before a CNOT gate propagates to the target. Finally, because measurements are themselves noisy, syndrome extraction must be repeated over multiple rounds.

As a consequence, the parity-check matrices $H_X$ and $H_Z$ from the CSS construction
\begin{equation}
  H =
  \begin{pmatrix}
    H_X & 0 \\
    0 &  H_Z
  \end{pmatrix},
  \qquad H_X H_Z^T = 0,
\end{equation}
no longer define the graph used for BP decoding. Instead, decoding is performed on spatio-temporal \emph{decoding matrices} $H_{X,\text{dec}}$ and $H_{Z,\text{dec}}$, whose columns index elementary circuit faults and whose rows index detect (syndrome-difference) events. The decoder task is to identify which fault, or which linear combination of faults, occurred during computation, given the observed detector pattern.

The construction of these matrices, as described in \cite{bravyi2024high, dennis2002topological}, starts by building the stabilizer circuit that implements one round of syndrome extraction (with qubit connectivity defined by $H_X$ and $H_Z$), and then repeating it for $m$ rounds. We follow the standard convention of setting the number of syndrome‑extraction rounds equal to the code distance, $m = d$, thereby balancing the time‑like and space‑like components of the circuit‑level fault distance \cite{dennis2002topological,fowler2012surface, bravyi2024high}.

On top of this space-time circuit, we enumerate every single-fault location. For each fault $f$, we simulate error propagation through the remainder of the circuit, recording the tuple formed by the \emph{detector pattern} $d_f$ and the \emph{logical effect} $\ell_f$. Entries with identical tuples are merged and their fault probabilities summed, since two faults that share both syndrome and logical effect are, by construction, indistinguishable to any decoder. In the end,
\begin{equation}
  p(d, \ell) \;=\; \sum_{f:\, (d_f,\ell_f)=(d,\ell)} p_f,
\end{equation}
where $p(d, \ell)$ is the probability of the tuple $(d, \ell)$ and $p_f$ is the probability of fault $f$.

Tracking the logical effect, not just the syndrome, is essential in QEC, since any correction that commutes with the logical operators yields a valid recovery. This is the familiar notion of \emph{degeneracy}, which is typical of quantum codes and has no classical counterpart. Even when the decoder outputs a physical error $e_2$ that differs from the true error $e_1$, as long as $e_1 e_2$ belongs to the stabilizer group $ \mathcal{S}$, the logical state is preserved, $e_1 e_2 |\psi\rangle \;=\; S|\psi\rangle \;=\; |\psi\rangle,\quad S\in\mathcal{S}.$ In practice, this means that the list of candidate errors explored by OSD can safely include many \emph{stabilizer-equivalent} alternatives, and the true quantity of interest is the logical failure probability rather than the disagreement with the true physical error.

\subsection*{Performance Metric}
Following \cite{bravyi2024high}, we report the per-round logical error rate $p_L$, obtained from the per-shot logical failure probability $P_L(N_c)$ measured over $N_c$ syndrome-extraction rounds:
\begin{equation}
  p_L \;=\; 1 - \bigl(1 - P_L(N_c)\bigr)^{1/N_c} \;\approx\; P_L(N_c)/N_c,
\end{equation}
where the approximation holds in the regime $P_L(N_c)\ll 1$. Reporting $p_L$ rather than $P_L(N_c)$ enables fair comparisons across codes simulated for different numbers of rounds and makes the threshold and sub-threshold scaling directly visible.

\subsection*{Implication for the decoder}
The full circuit-level noise model has two key consequences that shape our decoder design. First, the decoding matrices $H_{X,\text{dec}}$ and $H_{Z,\text{dec}}$ are substantially denser and larger than the original CSS check matrices; consequently, BP operates on a Tanner graph with many more short cycles. As a result, BP convergence is slower and less reliable than in a code-capacity setting. 
Second, the merged fault probabilities tend to be relatively uniform across the columns of $H_{X,\text{dec}}$ and $H_{Z,\text{dec}}$. However, even a small number of BP iterations can produce informative, non-uniform soft outputs that OSD can exploit. Together, these observations justify our design choices: we use a short BP \emph{warm-up} phase sufficient to inform the subsequent OSD stage, without attempting to drive BP to full convergence.

\section{Standard OSD Implementations}
\label{sec:standard-osd}

Two variants of OSD are commonly used in the literature: OSD-$w$ (also called OSD-E, where E stands for \emph{Exhaustive}) and OSD-CS (where CS stands for \emph{Combination Sweep})~\cite{roffe2020decoding}. Both build on a common foundation, OSD-0, and then search for a more likely candidate by enumerating bit-flip combinations over the OSD-0 free set. They differ only in which combinations are explored.

\subsection{OSD-0}
OSD-0 is the core of standard OSD approaches and also the starting point of our algorithm. It is guaranteed to return a syndrome-consistent error vector $e$. Given the BP soft outputs (LLRs), OSD-0 reorders the columns of the decoding matrix $H_{\text{dec}}$ and applies Gaussian Elimination (GE) to select $r = \text{rank}(H_{\text{dec}})$ linearly independent pivot columns $\mathcal{P}$. The remaining $k = N - r$ columns form the free set $\mathcal{F}$.

Gaussian elimination is guaranteed to select an independent pivot set of minimum aggregate LLR weight, i.e., an optimal solution obtained via a greedy algorithm (see the Appendix for a formal proof). We note that this property is often overlooked in the literature and state it here explicitly.
\

The free bits are then assigned their hard decisions, and the pivot bits are solved from the syndrome via back-substitution, as discussed in Section~\ref{sec:ordering}. The output is a candidate error $E_{\text{base}}$ that satisfies $H \cdot E_{\text{base}} = s.$

Two distinct column-ordering conventions appear in the literature and differ significantly in determining which bits become pivots. We discuss this trade-off and our choice in Section~\ref{sec:ordering}. On its own, OSD-0 does not approach ML performance; to do so, one must explore the solution space around $E_{\text{base}}$, which nevertheless provides a good starting point.

\subsection{OSD-$w$}
OSD-$w$~\cite{roffe2020decoding} selects the $w$ least reliable bits in $\mathcal{F}$ and enumerates all possible bit-flip patterns on them, producing a search space $\mathcal{D}_w$ of size
\begin{equation}
  |\mathcal{D}_w| \;=\; 2^w.
  \label{eq:osdw-size}
\end{equation}
Among these candidates, including $E_{\text{base}}$, the one with the lowest cost~\eqref{eq:cost} is returned. The original classical formulation of Fossorier and Lin~\cite{fossorier1995soft} instead enumerates all error patterns of Hamming weight up to $w$ over $\mathcal{F}$, yielding a search space of size $\sum_{i=0}^{w}\binom{k}{i}$. The quantum adaptation restricts the search to the least reliable subset of $\mathcal{F}$ to maintain tractability for the large information sets typical of QLDPC codes.

\subsection{OSD-CS}
OSD-CS is a fast heuristic introduced by Roffe \emph{et al.}~\cite{roffe2020decoding}. Rather than enumerating subsets of $\mathcal{F}$ up to weight $w$, it splits the search into two phases:
\begin{itemize}
  \item {Weight-1 phase:} test each of the $k$ free columns individually ($k$ candidates).
  \item {Weight-2 phase:} test all pairs drawn from the $\lambda$ \emph{most unreliable} free columns (i.e., those with the smallest LLR magnitude), yielding $\binom{\lambda}{2}$ candidates.
\end{itemize}
The resulting search space has size
\begin{equation}
  |\mathcal{D}_{\text{CS}}| \;=\; k + \binom{\lambda}{2}.
\end{equation}
Bravyi \emph{et al.}~\cite{bravyi2024high} use $\lambda=7$ in their simulations of BB codes under full circuit-level noise. While OSD-CS is computationally efficient and performs well in practice, it is inherently \emph{a priori}, i.e., it selects a fixed subregion of the search space regardless of how  candidates actually score. BF-OSD, introduced next, is designed to address this limitation.

\section{On OSD-0 Column Ordering}
\label{sec:ordering}

A design choice in OSD-0 that is rarely discussed explicitly is how columns are sorted before GE. This choice determines which bits become pivots and which remain free. Two conventions appear in the literature; we review both and argue that, in our setting where BP runs for only a few iterations, the classical formulation of Fossorier, properly adapted to syndrome decoding, provides a better baseline for BF-OSD solution-space exploration.

Throughout this section, we define the LLR as
\begin{equation} \label{eq:LLRs}
  \Lambda_i = \log\frac{P(e_i=0)}{P(e_i=1)},
\end{equation}
so that $\Lambda_i > 0$ indicates that BP believes bit $i$ is likely correct, $\Lambda_i < 0$ indicates that BP believes it is likely an error, and $|\Lambda_i|$ measures confidence regardless of sign.

\subsection{LLR Convention}

Roffe et al.\ \cite{roffe2020decoding} sort columns by ascending LLR, equivalently by descending $P(e_i = 1)$. Thus, bits most likely to be in error are included first in the pivot set $\mathcal{P}$, while bits with the largest positive $\Lambda_i$ are placed in the free set $\mathcal{F}$ and set to zero.

This choice has a potential drawback. When BP is confident that $e_i=1$ ($\Lambda_i \ll 0$), that hard decision is already available, yet it occupies a pivot position whose value is recomputed via matrix inversion. Meanwhile, genuinely uncertain bits (small $|\Lambda_i|$) may end up in the free set and be forced to zero, even though these are precisely the bits whose values should be inferred from the syndrome.

\subsection{Confidence Convention}

Fossorier and Lin \cite{fossorier1995soft} sort columns by ascending $|\Lambda_i|$, so that the least reliable bits (regardless of sign) become pivots, while the most reliable become free. In the original classical (codeword) setting, a preprocessing step flips every bit whose hard decision is 1, updates the syndrome accordingly, and negates the corresponding LLR, ensuring that all LLRs are positive. Free bits can then safely be set to zero, and pivot bits are solved from the updated syndrome.

In the quantum setting, although there is no received codeword, this preprocessing step has a natural interpretation in syndrome decoding: for every bit $i$ with $\Lambda_i < 0$, we accept the BP hard decision ($e_i = 1$), absorb its effect into the syndrome ($s \leftarrow s \oplus H_{\text{col}_i}$), and replace $\Lambda_i \leftarrow |\Lambda_i|$. After this transformation, all LLRs are non-negative, confidence sorting reduces to a standard ascending order, and OSD-0 proceeds exactly as in the classical case. 

As a result, uncertain bits are solved from the syndrome, where they belong, while confident bits, whether correct or erroneous, retain their hard decisions. This avoids both failure modes: no confident-error bit is forced to zero, and no uncertain bit is locked into a default value.

To our knowledge, this distinction has not been made explicit in the QLDPC literature.

\subsection{Implications for BF-OSD}

In the standard BP\,+\,OSD pipeline, BP is typically run until convergence (or until a maximum number of iterations is reached), and OSD is invoked only upon failure. In that regime, most posterior LLRs are positive after convergence, and the difference between the two ordering conventions is minor. In our setting, however, BP is run for only a few iterations to propagate local information across the Tanner graph. As a result, the posterior LLRs are often far from convergence.

By adopting the confidence-based ordering with the pre-flip step, BF-OSD produces an OSD-0 baseline $E_{\text{base}}$ that respects all BP hard decisions  while concentrating the degrees of freedom on  genuinely uncertain bits. This provides a stronger starting point for the subsequent null-space coset search.

\section{BF-OSD}
\label{sec:grond}

BF-OSD is built on top of the OSD-0 baseline of Section~\ref{sec:ordering}: confidence sorting with the pre-flip step adapted to syndrome decoding. It takes $E_{\text{base}}$ as a starting point and explores the solution space of $H_{\text{dec}} \cdot e = s$ in order of increasing cost, or equivalently decreasing likelihood, until a predefined query budget $Q$ is exhausted.

\subsection{The Solution Space as an Affine Coset}

The set of bit-strings consistent with the observed syndrome is an affine subspace of $\mathbb{F}_2^{N}$, namely a coset of $\ker(H_{\text{dec}})$:
\begin{equation}
  \{e \in \mathbb{F}_2^{N} : H_{\text{dec}} \cdot e = s\} \;=\; E_{\text{base}} \,\oplus\, \ker(H_{\text{dec}}).
\end{equation}
Every candidate can therefore be written as
\begin{equation}
  e \;=\; E_{\text{base}} \,\oplus\, g, \qquad g \in \ker(H_{\text{dec}}),
\end{equation}
and the syndrome is satisfied by construction, since $H_{\text{dec}}\, g = 0$ implies $H_{\text{dec}}\, e = H_{\text{dec}}\, E_{\text{base}} = s$. The kernel has dimension $N-r$, where $r=\text{rank}(H_{\text{dec}})$ and $k=N-r$ is the size of the free set.

A natural basis of $\ker(H_{\text{dec}})$ can be read directly from the reduced row-echelon form $R$ produced by GE.
Let $\mathcal{P} = \{\pi_1, \dots, \pi_r\}$ and $\mathcal{F} = \{f_1, \dots, f_k\}$ denote the pivot and free column indices, respectively.
For each free column $j \in \mathcal{F}$, define the null-space generator $g_j \in \mathbb{F}_2^{N}$ component-wise as
\begin{equation}
  (g_j)_\ell \;=\;
  \begin{cases}
    1        & \text{if } \ell = j, \\
    R_{i,j}  & \text{if } \ell = \pi_i \text{ for some } i \in \{1,\dots,r\}, \\
    0        & \text{otherwise}.
  \end{cases}
  \label{eq:generator}
\end{equation}

\subsection{Cost Function and Generator Weights}

We rank candidates by their negative log-likelihood cost. Using the LLRs~\eqref{eq:LLRs}, we define
\begin{equation}
  \text{cost}(e) \;=\; \sum_{i:\, e_i = 1} |\Lambda_i|,
  \label{eq:cost}
\end{equation}
so that lower cost corresponds to a more probable error pattern. After the confidence pre-flip, all LLRs are non-negative, so $|\Lambda_i| = \Lambda_i$ and the cost reduces to a simple sum over the support of $e$. Since OSD-0 selects pivots optimally, $E_{\text{base}}$ already attains a good cost within the coset; BF-OSD then searches for a non-trivial $g \in \ker(H_{\text{dec}})$ such that $E_{\text{base}} \oplus g$ has strictly lower cost.

To order the search, we associate each generator $g_j$ with a soft weight
\begin{equation}
  w_j \;=\; \sum_{i:\, (g_j)_i = 1} \Lambda_i,
  \label{eq:gen-weight}
\end{equation}
i.e., the cost incurred by XOR-ing $g_j$ alone into $E_{\text{base}}$. Generators with low $w_j$ act on bits about which BP was uncertain, so XOR-ing them with $E_{\text{base}}$ is \emph{cheap}. Generators with high $w_j$ would overturn reliable hard decisions and are \emph{expensive}.

\subsection{Best-first Coset Search}

BF-OSD explores the coset via a best-first search tree (see the example in Figure~\ref{fig:grond_tree}), systematically evaluating combinations of generators in increasing order of cost. The state space of this search tree is $\ker(H_{\text{dec}})$. A node in the tree is defined by a subset of free column indices $T \subseteq \mathcal{F}$, which corresponds to the candidate error $e_T$ which costs is evaluated using~\eqref{eq:cost}

The search begins at the root node, represented by the empty set $T = \emptyset$, which corresponds to the baseline $e_\emptyset = E_{\text{base}}$.

To expand the tree without generating duplicate combinations, we enforce a strict ordering rule. The children of a node $T$ are generated by adjoining a single new generator $g_{j^\star}$, with the condition that $j^\star > \max T$. This guarantees that each combination is visited exactly once.

To ensure the algorithm evaluates the most probable candidates first, BF-OSD employs a priority queue keyed on the candidate's total cost defined in~\eqref{eq:cost}. The priority queue acts as a sorted waiting list, always keeping the lowest-cost candidate at the front.

Rather than computing this sum from scratch for every node, BF-OSD computes the cost of a child node $T' = T \cup \{j^\star\}$ incrementally. The new candidate $e_{T'} = e_T \oplus g_{j^\star}$ differs from its parent $e_T$ only at the bit indices where $(g_{j^\star})_i = 1$. Therefore, BF-OSD updates the parent cost by considering only these flipped bits: if bit $i$ flips in $e_T$, the cost increases or decreases by $|\Lambda_i|$, depending on whether the bit changes from $0$ to $1$ or from $1$ to $0$.
By pushing these incrementally evaluated children into the priority queue and always popping the top element, BF-OSD traverses the candidate space in non-decreasing order of cost.


\begin{figure}[!t]
  \centering
  \resizebox{\linewidth}{!}{%
    \begin{tikzpicture}[
        >=Latex,
        treenode/.style={
          rectangle, rounded corners=4pt, draw=black!40, thick,
          fill=nodebg, align=center, inner sep=8pt, minimum width=2.4cm,
          font=\small
        },
        grondbadge/.style={
          circle, fill=grondblue!10, draw=grondblue, thick,
          inner sep=1pt, minimum size=16pt, font=\sffamily\bfseries\footnotesize, text=grondblue
        },
        osdbadge/.style={
          circle, fill=osdorange!10, draw=osdorange, thick,
          inner sep=1pt, minimum size=16pt, font=\sffamily\bfseries\footnotesize, text=osdorange
        },
        skipbadge/.style={
          circle, fill=black!10, draw=black!50, thick,
          inner sep=1pt, minimum size=16pt, font=\sffamily\bfseries\footnotesize, text=black!70
        },
        edge/.style={
          draw=black!50, thick, ->
        }
      ]

      \draw[dashed, gray!40, thick] (-7.5, -2.5) -- (5.5, -2.5) node[right, font=\itshape\small, text=gray!70!black] {Weight 1};
      \draw[dashed, gray!40, thick] (-7.5, -5.0) -- (5.5, -5.0) node[right, font=\itshape\small, text=gray!70!black] {Weight 2};
      \draw[dashed, gray!40, thick] (-7.5, -7.5) -- (5.5, -7.5) node[right, font=\itshape\small, text=gray!70!black] {Weight 3};


      \node[treenode] (root) at (0, 0) {$T = \emptyset$\\$\text{cost} = 0$};

      \node[treenode] (n1) at (-4, -2.5) {$T = \{1\}$\\$\text{cost} = 3$};
      \node[treenode] (n2) at (0, -2.5) {$T = \{2\}$\\$\text{cost} = 6$};
      \node[treenode] (n3) at (4, -2.5) {$T = \{3\}$\\$\text{cost} = 9$};

      \node[treenode] (n12) at (-6, -5) {$T = \{1, 2\}$\\$\text{cost} = 4$};
      \node[treenode] (n13) at (-2, -5) {$T = \{1, 3\}$\\$\text{cost} = 7$};
      \node[treenode] (n23) at (2, -5) {$T = \{2, 3\}$\\$\text{cost} = 10$};

      \node[treenode] (n123) at (-6, -7.5) {$T = \{1, 2, 3\}$\\$\text{cost} = 8$};

      \draw[edge] (root) -- (n1) node[pos=0.2, left=10pt, text=black!60] {$\oplus g_1$};
      \draw[edge] (root) -- (n2) node[midway, left=2pt,  text=black!60] {$\oplus g_2$};
      \draw[edge] (root) -- (n3) node[pos=0.2, right=10pt,  text=black!60] {$\oplus g_3$};

      \draw[edge] (n1) -- (n12) node[midway, left=2pt,  text=black!60] {$\oplus g_2$};
      \draw[edge] (n1) -- (n13) node[midway, right=2pt,  text=black!60] {$\oplus g_3$};
      \draw[edge] (n2) -- (n23) node[midway, right=2pt,  text=black!60] {$\oplus g_3$};

      \draw[edge] (n12) -- (n123) node[midway, left=2pt,  text=black!60] {$\oplus g_3$};

      \node[grondbadge, xshift=-10pt, yshift=10pt] at (root.north west) {1};
      \node[grondbadge, xshift=-10pt, yshift=10pt] at (n1.north west) {2};
      \node[grondbadge, xshift=-10pt, yshift=10pt] at (n12.north west) {3}; 
      \node[grondbadge, xshift=-10pt, yshift=10pt] at (n2.north west) {4};
      \node[grondbadge, xshift=-10pt, yshift=10pt] at (n13.north west) {5};
      \node[grondbadge, xshift=-10pt, yshift=10pt] at (n123.north west) {6}; 
      \node[grondbadge, xshift=-10pt, yshift=10pt] at (n3.north west) {7};
      \node[skipbadge, xshift=-10pt, yshift=10pt] at (n23.north west) {X};

      \node[osdbadge, xshift=10pt, yshift=10pt] at (root.north east) {1};
      \node[osdbadge, xshift=10pt, yshift=10pt] at (n1.north east) {2};
      \node[osdbadge, xshift=10pt, yshift=10pt] at (n2.north east) {3};
      \node[osdbadge, xshift=10pt, yshift=10pt] at (n3.north east) {4}; 
      \node[osdbadge, xshift=10pt, yshift=10pt] at (n12.north east) {5};
      \node[osdbadge, xshift=10pt, yshift=10pt] at (n13.north east) {6};
      \node[osdbadge, xshift=10pt, yshift=10pt] at (n23.north east) {7};
      \node[skipbadge, xshift=10pt, yshift=10pt] at (n123.north east) {X}; 

      \begin{scope}[shift={(-8, -0.75)}]
        \draw[rounded corners=4pt, fill=white, draw=black!30, thick] (0,0) rectangle (5.6, 2.0);
        \node[font=\bfseries\small, anchor=north] at (2.8, 2.0) {Evaluation Order \& Setup};

        \node[grondbadge] at (0.6, 1.2) {\#};
        \node[anchor=west, font=\small] at (1.0, 1.2) {BF-OSD (Cost-driven)};

        \node[osdbadge] at (0.6, 0.5) {\#};
        \node[anchor=west, font=\small] at (1.0, 0.5) {OSD-CS (Breadth-first)};
      \end{scope}

    \end{tikzpicture}%
  }
  \caption{Comparison of solution-space traversal between BF-OSD and OSD-CS on a subset of three generators with initial base costs $w(g_1)=3, w(g_2)=6$, and $w(g_3)=9$. Because generators $\{1\}$ and $\{2\}$ destructively interfere to lower the total cost, BF-OSD's priority queue immediately traverses to depth 2 (evaluating $\{1,2\}$ as its $3^{\text{rd}}$ step) before evaluating higher-cost single-generator candidates. By constrast, OSD-CS statically sweeps all weight-1 combinations before moving to weight-2 combinations, wasting query budget on low-probability candidates and completely missing the weight-3 candidate. Moreover, with a budget $Q = 7$, the node with cost $= 10$ is never visited by BF-OSD.}
  \label{fig:grond_tree}
\end{figure}

\subsection{Budget, Termination, and Anytime Behaviour}

Because every candidate $E_{\text{base}} \oplus g$ already satisfies the syndrome, there is no early stopping based on syndrome matching. Instead, BF-OSD evaluates candidates one at a time until its query budget $Q$ is exhausted and returns the lowest-cost candidate seen so far. Several properties follow directly:
\begin{itemize}
  \item Monotone improvement: the returned cost can only decrease, or remain constant, as $Q$ increases. The first pop candidate is $e_\emptyset = E_{\text{base}}$, and any subsequent candidate is retained only if it improves the running best.
  \item No early termination by weight: a weight-$2$ generator combination can have lower cost than every weight-$1$ combination (when two generators share many reliable pivots). Therefore, the heap must be drained up to $Q$.
  \item Anytime decoding: BF-OSD always holds a valid, syndrome-consistent candidate. It can be interrupted at any moment, returning the current best solution.
  \item Coverage: in the limit $Q \to 2^{k}$, BF-OSD enumerates the entire coset and coincides with coset ML decoding.
\end{itemize}

\section{Performance Analysis}
\label{sec:complexity}

Let $N$ be the number of columns of $H_{\text{dec}}$, $r=\text{rank}(H_{\text{dec}})$, $k=N-r$ the size of the free set, $I_{\text{BP}}$ the number of BP iterations, and $Q$ the BF-OSD query budget.

\paragraph*{Belief Propagation} BP performs message updates along each edge of the Tanner graph, so its complexity is $\mathcal{O}(|E|)$ per iteration, where $|E|$ is the edge count. Replacing the flooding schedule with a sequential one leaves this per-iteration cost unchanged: the same message updates are performed, only in a different order. However, the sequential schedule substantially reduces the number of iterations needed to propagate information over a given graph distance, allowing $I_{\text{BP}}$ to remain small. The total BP cost is therefore $\mathcal{O}(I_{\text{BP}} |E|)$.

\paragraph*{OSD-0} Column sorting costs $\mathcal{O}(N \log N)$, and GE over $\mathbb{F}_2$ on an $r \times N$ matrix is $\mathcal{O}(r^2 N)$.

\paragraph*{Solution-space exploration} The three OSD variants differ only in how they enumerate candidates on top of the same OSD-0 baseline. OSD-$w$ evaluates $|\mathcal{D}_w| = 2^w$ candidates over the $w$ least reliable free columns, each with cost $\mathcal{O}(N)$ for the XOR and cost evaluation, giving a total exploration cost of $\mathcal{O}(2^w N)$. OSD-CS evaluates $|\mathcal{D}_{\text{CS}}| = k + \binom{\lambda}{2}$ candidates at the same per-candidate cost, for a total of $\mathcal{O}((k + \lambda^2)\, N)$. BF-OSD evaluates $Q$ candidates, with an additional amortized $\mathcal{O}(\log Q)$ cost per pop for the priority queue, yielding a total exploration cost of $\mathcal{O}(Q(N + \log Q))$. Choosing $Q \lesssim |\mathcal{D}_{\text{CS}}|$ therefore gives BF-OSD a budget comparable to OSD-CS, while the best-first ordering ensures that the $Q$ evaluated candidates are the $Q$ lowest-cost ones in the coset.

\subsection{Simulation Results}
\label{sec:results}

We benchmark the full BP\,+\,BF-OSD pipeline on the same family of BB codes reported in~\cite{bravyi2024high}, namely those reported in Table~\ref{tab:results_ttable}. For each code, we simulate $N_c = d$ rounds of syndrome extraction under full circuit-level noise at a physical error rate of $p = 10^{-3}$, matching the setup of~\cite{bravyi2024high}. BP is run using the sum-product algorithm with a  sequential schedule for $I_{\text{BP}}$ iterations, where $I_{\text{BP}}$ is much smaller than the convergence horizon of $10000$ iterations used in~\cite{bravyi2024high}.  BF-OSD is then always invoked with query budget $Q$. For reference, we also run with the same simulator on identical shots using the baseline BP\,+\,OSD-CS decoder of~\cite{bravyi2024high} with $\lambda=7$. We report the per-round logical error rate $p_L$, obtained from $P_L(N_c)$ via the conversion introduced in Section~\ref{sec:noise}.

Figure~\ref{fig:results} compares the logical error rate of BF-OSD with that of the baseline across the five BB codes, while Table~\ref{tab:results_ttable} reports the corresponding values at $p = 10^{-3}$. The expected BB-code behavior is observed: $p_L$ decreases with increasing code distance.  BF-OSD tracks, and in some cases slightly improves upon the BP\,+\,OSD-CS baseline while using a comparable query budget.

We further tested the same codes with a query budget $Q$ of 1\% of the original budget, obtaining comparable results.

Finally, we tested the above experiments with both column-ordering conventions and did not observe statistically meaningful differences. This suggests that column ordering may have only a limited effect on BF-OSD in the present regime. In the discussion of future work, we formulate hypotheses for regimes in which this ordering may become more important.

\begin{figure}
  \centering
  \includegraphics[width=0.85\linewidth]{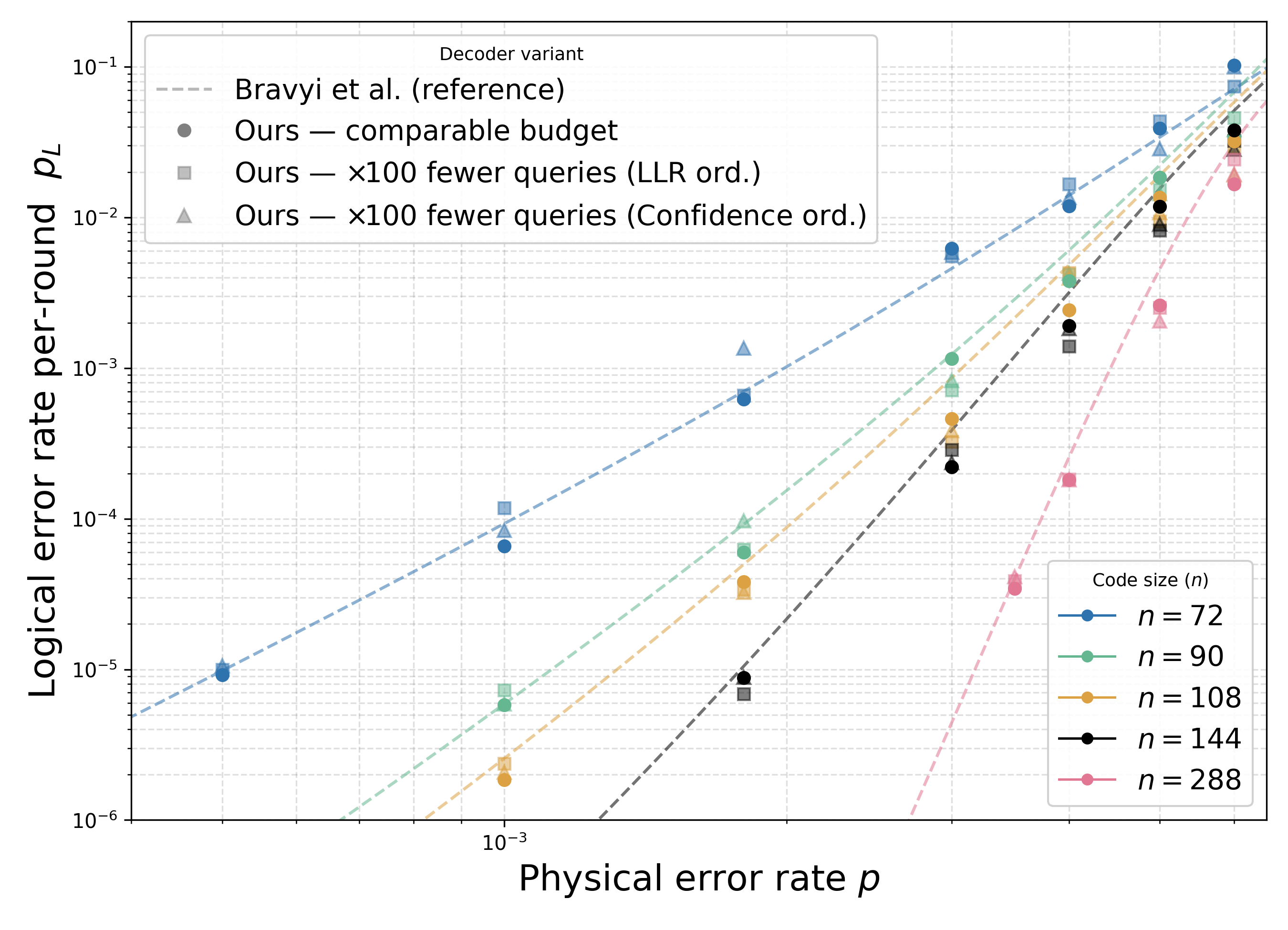}
  \caption{Per-round logical error rate $p_L$ of the BP\,+\,BF-OSD decoder under full circuit-level noise, for the five BB codes of~\cite{bravyi2024high}, as a function of physical error rate. Dashed curves report the BP\,+\,OSD-CS baseline of~\cite{bravyi2024high} for reference. }
  \label{fig:results}
\end{figure}

\begin{table}
  \renewcommand{\arraystretch}{1.2}
  \caption{Per-round logical error rate $p_L$ at physical error rate $p=10^{-3}$ under full circuit-level noise. Baseline values are from~\cite{bravyi2024high}.}
  \label{tab:results_ttable}
  \centering
  \begin{tabular}{lcc}
    \hline \hline
    \textbf{Code $[[N, k, d]]$} & \textbf{BP\,+\,OSD-CS~\cite{bravyi2024high}} & \textbf{BP\,+\,BF-OSD (this work)} \\
    \hline
    $[[72, 12, 6]]$   & $7 \times 10^{-5}$  & $6.58 \times 10^{-5}$ \\
    $[[90, 8, 10]]$   & $5 \times 10^{-6}$  & $5.81 \times 10^{-6}$ \\
    $[[108, 8, 10]]$  & $2 \times 10^{-6}$  & $1.85 \times 10^{-6}$ \\
    $[[144, 12, 12]]$ & $2 \times 10^{-7}$  & $2.38 \times 10^{-7}$ \\
    $[[288, 12, 18]]$ & $2 \times 10^{-12}$ & $8.17 \times 10^{-12}$ \\
    \hline \hline
  \end{tabular}
\end{table}

\section{Conclusion}

We have presented Best-First Ordered Statistic Decoding (BF-OSD), a variant of OSD that explores the coset of syndrome-consistent solutions in order of increasing cost via a best-first traversal of XOR combinations of null-space generators. Under a fixed query budget $Q$, BF-OSD uses that budget more efficiently than OSD-$w$ or OSD-CS, as it evaluates the $Q$ lowest-cost candidates according to an exact, incrementally maintained cost metric. We have further argued that, in the full circuit-level noise setting, conditioning the invocation of OSD on BP convergence is counterproductive. BP converges slowly on the denser space-time Tanner graph, yet even a small number of sequential iterations already produces informative soft outputs that BF-OSD can exploit effectively. Our simulations on the Bivariate Bicycle codes of~\cite{bravyi2024high} support this interpretation. Natural directions for future work include designing an adaptive query budget $Q(s, \Lambda)$ conditioned on the observed syndrome and BP reliability, as well as studying BF-OSD under min-sum approximations. The latter would help clarify whether, in hardware friendly implementations, the choice of OSD column ordering has a measurable impact on the logical error rate.

\appendix
\label{sec:appendix}
Applying Gaussian elimination to a matrix whose columns are sorted by weight, effectively acts as a greedy algorithm: a column is selected as a pivot if and only if it is linearly independent of the previously selected pivot columns.

\begin{definition}
  Let $E$ be a finite set, and let $\mathcal{I}\subseteq 2^E$. The pair $(E,\mathcal{I})$ is a \emph{matroid} if the following three properties hold:
  \begin{enumerate}
    \item $\varnothing\in \mathcal{I}$;
    \item whenever $I\in \mathcal{I}$ and $J\subseteq I$, one has $J\in \mathcal{I}$;
    \item whenever $I,J\in \mathcal{I}$ and $|I|<|J|$, there exists $e\in J\setminus I$ such that $I\cup\{e\}\in \mathcal{I}$.
  \end{enumerate}
  The sets in $\mathcal{I}$ are called \emph{independent}.
\end{definition}

For a matroid $(E,\mathcal{I})$ and a subset $X\subseteq E$, define the \emph{rank} of $X$ by
\begin{equation}
  \rho(X)\triangleq \max\{ |I| : I\subseteq X,\ I\in \mathcal{I} \}.
\end{equation}
A subset $B\subseteq E$ is called a \emph{basis} if $B\in \mathcal{I}$ and $|B|=\rho(E)$.

\begin{lemma}
  \label{lem:maximal-equals-rank}
  Let $(E,\mathcal{I})$ be a matroid, and let $X\subseteq E$. If $I\subseteq X$ is independent and maximal with respect to inclusion among independent subsets of $X$, then $|I|=\rho(X)$.
\end{lemma}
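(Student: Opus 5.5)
The plan is a short proof by contradiction that uses only the definition of rank and the exchange axiom (property 3 of the matroid definition). Let $I\subseteq X$ be independent and maximal with respect to inclusion among the independent subsets of $X$. Since $I$ is itself an independent subset of $X$, the definition of $\rho$ gives $|I|\le \rho(X)$ directly. It therefore remains to prove $|I|\ge\rho(X)$.

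Suppose instead that $|I|<\rho(X)$. By the definition of $\rho(X)$ as a maximum over a finite, nonempty family (nonempty because $\varnothing\in\mathcal{I}$), there is an independent set $J\subseteq X$ with $|J|=\rho(X)>|I|$. Apply property 3 to the pair $I,J$: it yields an element $e\in J\setminus I$ such that $I\cup\{e\}\in\mathcal{I}$. Since $e\in J\subseteq X$, the set $I\cup\{e\}$ is an independent subset of $X$. It strictly contains $I$ because $e\notin I$. This contradicts the maximality of $I$, so $|I|=\rho(X)$.

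No step here is a real obstacle. The one point needing care is to check that the element supplied by the exchange axiom lies in $X$. This holds because it comes from $J$, which was chosen inside $X$, so the augmented set stays within the family over which $I$ is maximal. Property 2 (hereditary closure) is not needed for this lemma.
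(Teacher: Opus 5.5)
Your proof is correct and follows the paper's argument essentially step for step. You get $|I|\le\rho(X)$ from the definition of rank, then reach a contradiction by augmenting $I$ with an element of a maximum-size independent set $J\subseteq X$ via the exchange axiom, with your side remarks (nonemptiness of the family since $\varnothing\in\mathcal{I}$, and that the hereditary axiom is unused) being accurate small refinements.
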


\begin{proof}
  Since $I\subseteq X$ is independent, by definition of rank one has $|I|\le \rho(X)$. Suppose, for contradiction, that $|I|<\rho(X)$. Then there exists an independent set $J\subseteq X$ such that $|J|=\rho(X)> |I|$. By the augmentation axiom of the matroid, there exists $e\in J\setminus I$ such that $I\cup\{e\}$ is independent. Because $J\subseteq X$, the element $e$ belongs to $X$, and hence $I\cup\{e\}\subseteq X$. This contradicts the maximality of $I$ among independent subsets of $X$. Therefore $|I|=\rho(X)$.
\end{proof}

\begin{proposition}
  \label{prop:vector-matroid}
  Let $A=[\mathbf{a}_1\,\mathbf{a}_2\,\cdots\,\mathbf{a}_n]$ be a matrix over a field $\mathbb{F}$. Let $E=\{1,2,\dots,n\}$, and define
  \begin{equation}
    \mathcal{I}\triangleq \bigl\{ X\subseteq E : \{\mathbf{a}_i:i\in X\}\ \text{is linearly independent over }\mathbb{F} \bigr\}.
  \end{equation}
  Then $(E,\mathcal{I})$ is a matroid.
\end{proposition}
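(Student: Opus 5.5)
We verify the three matroid axioms of the Definition directly for the family $\mathcal{I}$ of column-index sets whose columns are linearly independent. Throughout, for $X\subseteq E$ we write $A_X=\{\mathbf{a}_i : i\in X\}$ and $\operatorname{span}(A_X)$ for its linear span over $\mathbb{F}$. We read $A_X$ as the indexed family of columns, so that two indices carrying equal columns give a dependent set.

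The first two axioms are immediate. For the first, the empty family is vacuously linearly independent, so $\varnothing\in\mathcal{I}$. For the second, let $I\in\mathcal{I}$ and $J\subseteq I$. Any nontrivial vanishing linear combination of the columns indexed by $J$ is also a nontrivial vanishing combination of the columns indexed by $I$, after padding with zero coefficients. This is impossible, so $J\in\mathcal{I}$.

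The substance lies in the augmentation axiom, which we prove by contradiction via a dimension count. Let $I,J\in\mathcal{I}$ with $|I|<|J|$, and suppose that $I\cup\{e\}\notin\mathcal{I}$ for every $e\in J\setminus I$. The first step is the standard fact that if $A_I$ is independent and $A_I\cup\{\mathbf{v}\}$ is dependent, then $\mathbf{v}\in\operatorname{span}(A_I)$. To see this, take a nontrivial vanishing combination; the coefficient of $\mathbf{v}$ must be nonzero, since otherwise $A_I$ would be dependent. Dividing by that coefficient expresses $\mathbf{v}$ in terms of $A_I$. Applying this to each $e\in J\setminus I$, and noting that the columns indexed by $J\cap I$ lie in $\operatorname{span}(A_I)$ trivially, we obtain $\operatorname{span}(A_J)\subseteq\operatorname{span}(A_I)$. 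Taking dimensions gives
\begin{equation}
  |J| = \dim\operatorname{span}(A_J) \le \dim\operatorname{span}(A_I) = |I|,
\end{equation}
where both equalities hold because independent families are bases of their spans. This contradicts $|I|<|J|$, so some $e\in J\setminus I$ satisfies $I\cup\{e\}\in\mathcal{I}$.

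The main point requiring care is this last step. It relies on the dimension theorem: all bases of a finite-dimensional subspace have the same cardinality, equivalently, a subspace spanned by $|I|$ vectors cannot contain $|J|>|I|$ independent vectors. We take this as a standard result of linear algebra valid over any field, in particular over $\mathbb{F}_2$ as used for $H_{\text{dec}}$. If a self-contained argument were wanted, we would invoke the Steinitz exchange lemma. With all three axioms verified, $(E,\mathcal{I})$ is a matroid.
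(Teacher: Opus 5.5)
The paper gives no proof of this proposition. It states it in the appendix as the standard vector-matroid fact and then simply invokes it in the proof of the greedy theorem. Your argument therefore supplies a missing proof rather than paralleling an existing one, and it is correct. The empty-set and hereditary axioms are immediate. For augmentation, you show that every column of $A_J$ would lie in $\operatorname{span}(A_I)$, which forces $|J|\le|I|$. That is the textbook proof, and it rests only on invariance of dimension, which you may legitimately take as standard.

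Keep your decision to read $\{\mathbf{a}_i:i\in X\}$ as an indexed family explicit; it is not pedantry. Under a literal set reading, the proposition is false when $A$ has repeated columns. Over $\mathbb{F}_2$, take $A=[\,\mathbf{v}\;\;\mathbf{v}\;\;\mathbf{w}\;\;\mathbf{v}+\mathbf{w}\,]$ with $\mathbf{v},\mathbf{w}$ independent. Then $I=\{3,4\}$ and $J=\{1,2,3\}$ would both count as independent, yet $I\cup\{1\}$ and $I\cup\{2\}$ both give the dependent set $\{\mathbf{v},\mathbf{w},\mathbf{v}+\mathbf{w}\}$, so augmentation fails.

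The indexed reading is also what makes $\rho(E)=\operatorname{rank}(A)=r$, which the subsequent theorem relies on. It also matches how Gaussian elimination treats a duplicated column of $H_{\text{dec}}$: the duplicate never becomes a second pivot.
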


\begin{theorem}
Let $A=[\mathbf{a}_1\,\mathbf{a}_2\,\cdots\,\mathbf{a}_n]\in\mathbb{F}^{m\times n}$ have rank $r$, and let
$c_1,\ldots,c_n\in\mathbb{R}$ be the column costs (e.g., the LLR weights). Consider the greedy procedure that orders the
indices so that
\begin{equation}
    c_{\pi(1)}\le c_{\pi(2)}\le \cdots \le c_{\pi(n)}
\end{equation}

and then scans $\pi(1),\pi(2),\ldots,\pi(n)$ in this order, adding index $\pi(t)$ whenever
\begin{equation}
    \{\mathbf{a}_j:j\in G\cup\{\pi(t)\}\}
\end{equation}
is linearly independent, where $G$ denotes the set of indices selected so far. Let $G^\star$ be the
set obtained after $r$ columns have been selected. Then $G^\star$ is a minimum-cost basis of the
column matroid of $A$, i.e.,
\begin{align}
\sum_{j\in G^\star} c_j &= \min\Bigl\{\sum_{j\in B} c_j:\, B\subseteq E, \nonumber \\
&\qquad\qquad\quad \{\mathbf{a}_j:j\in B\}\text{ is a basis of }\operatorname{col}(A) \Bigr\}.
\end{align}
\end{theorem}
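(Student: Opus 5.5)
The plan is the standard matroid-greedy exchange argument, run inside the column matroid supplied by Proposition~\ref{prop:vector-matroid}. Lemma~\ref{lem:maximal-equals-rank} is used to control cardinalities.

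\textbf{Step 1: $G^\star$ is a basis.} I first check that the procedure actually reaches $r$ selected columns. The selected set $G$ stays independent throughout, by construction. Suppose, when the scan ends, $G$ were not maximal among independent subsets of $E$. Then some $e \notin G$ would have $G\cup\{e\}$ independent. When $e$ was scanned, the partial set $G'\subseteq G$ satisfied $G'\cup\{e\}\subseteq G\cup\{e\}$, so by the hereditary axiom $G'\cup\{e\}$ was independent and $e$ would have been added, a contradiction. Hence $G$ is maximal, and by Lemma~\ref{lem:maximal-equals-rank} applied with $X=E$ we get $|G|=\rho(E)=r$. In particular $G^\star$ is well defined and is a basis.

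\textbf{Step 2: compare sorted costs.} List $G^\star=\{g_1,\dots,g_r\}$ in selection order, so $c_{g_1}\le\cdots\le c_{g_r}$. Take any basis $B$ and list it as $b_1,\dots,b_r$ with $c_{b_1}\le\cdots\le c_{b_r}$. I claim $c_{g_i}\le c_{b_i}$ for every $i$; summing over $i$ then gives the minimality statement.

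\textbf{Step 3: prove the claim by contradiction.} Suppose $i$ is the first index with $c_{b_i}<c_{g_i}$. Set $I=\{g_1,\dots,g_{i-1}\}$ and $J=\{b_1,\dots,b_i\}$. Both are independent (subsets of bases), and $|J|>|I|$. By the augmentation axiom there is $e\in J\setminus I$ with $I\cup\{e\}$ independent. Then $c_e\le c_{b_i}<c_{g_i}$, so $e$ precedes $g_i$ in the scan order $\pi$. Moreover $e\notin G^\star$: otherwise $e$ would be selected before $g_i$, forcing $e\in I$.

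Now look at the moment $e$ was scanned. The selected set then was some $G'\subseteq I$, and $G'\cup\{e\}\subseteq I\cup\{e\}$ is independent by the hereditary axiom. So $e$ would have been added, contradicting $e\notin G^\star$.

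\textbf{Main obstacle.} The delicate point is the tie-breaking in Step~3. I need to make sure that $c_e<c_{g_i}$ really places $e$ strictly earlier in the scan order than $g_i$. Because the inequality is strict, any sort consistent with $c_{\pi(1)}\le\cdots\le c_{\pi(n)}$ does this, so ties are harmless. I also need that every index scanned before $g_i$ was scanned while the selected set was contained in $I$; this holds because $I$ is exactly the set selected before $g_i$.
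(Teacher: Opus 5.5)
Your proof is correct, but it follows a different route from the paper's. The paper never carries out an exchange argument. It shifts the costs to $w_j=C-c_j$ with $C=\max_j c_j$ and notes that, since every basis has exactly $r$ elements, minimizing $\sum_{j\in B}c_j$ is equivalent to maximizing $\sum_{j\in B}w_j$. It then observes that scanning by nondecreasing $c_j$ is the same as scanning by nonincreasing $w_j$, and invokes Edmonds' greedy theorem as a black box. You instead reprove the greedy theorem from the matroid axioms, using augmentation on $I=\{g_1,\dots,g_{i-1}\}$ and $J=\{b_1,\dots,b_i\}$.

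\textbf{What your route buys.} It is self-contained and needs no citation. It works directly with arbitrary real costs in minimization form, so no weight shift is required. It also proves more than the theorem asks: the componentwise domination $c_{g_i}\le c_{b_i}$ for every $i$ (Gale-type optimality), which gives optimality for any objective monotone in the sorted costs, not just their sum. Your Step~1 explicitly verifies that the scan reaches $r$ selected columns. The paper's proof takes this for granted inside Edmonds' theorem and the phrase ``after $r$ columns have been selected.'' Step~1 is also the only place where Lemma~\ref{lem:maximal-equals-rank} is actually used; the paper states that lemma but never invokes it. Your tie-handling is sound, since the strict inequality $c_e<c_{g_i}$ forces $e$ to precede $g_i$ in any nondecreasing ordering.

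\textbf{What the paper's route buys.} Brevity and modularity. Once the shift reduces the question to the textbook maximum-weight-basis problem, nothing further needs to be checked.
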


\begin{proof}
Let $E=\{1,2,\dots,n\}$ and define
\begin{equation}
    \mathcal{I}
=
\Bigl\{
I\subseteq E:\{\mathbf{a}_j:j\in I\}\text{ is linearly independent}
\Bigr\}.
\end{equation}
As established in Proposition~\ref{prop:vector-matroid}, $(E,\mathcal{I})$ is the vector matroid associated with $A$.
Its rank is
\begin{equation}
    \rho(E)=\operatorname{rank}(A)=r.
\end{equation}
Hence the bases of this matroid are exactly the index sets $B\subseteq E$ of cardinality $r$ such
that $\{\mathbf{a}_j:j\in B\}$ is a basis of $\operatorname{col}(A)$.

Now define a shifted weight function $w:E\to\mathbb{R}$ by
\begin{equation}
    w_j = C-c_j,\qquad j\in E,
\end{equation}
where
\begin{equation}
    C=\max_{1\le j\le n} c_j.
\end{equation}
Then $w_j\ge 0$ for all $j\in E$. Since every basis $B$ of the matroid has cardinality $r$, we have
\begin{equation}
    \sum_{j\in B} w_j
=
\sum_{j\in B}(C-c_j)
=
rC-\sum_{j\in B} c_j.
\end{equation}
Therefore, for two bases $B_1$ and $B_2$,
\begin{equation}
    \sum_{j\in B_1} c_j \le \sum_{j\in B_2} c_j
\quad\Longleftrightarrow\quad
\sum_{j\in B_1} w_j \ge \sum_{j\in B_2} w_j.
\end{equation}
Thus minimizing the total cost over all bases is equivalent to maximizing the total shifted weight
over all bases.

The greedy algorithm described in the statement scans the columns in nondecreasing order of $c_j$.
Equivalently, it scans them in nonincreasing order of $w_j$, since
\begin{equation}
    c_i\le c_j \quad\Longleftrightarrow\quad w_i\ge w_j.
\end{equation}
At each step it keeps an index precisely when adding that index preserves independence in the
matroid $(E,\mathcal{I})$. Hence this is exactly the standard greedy algorithm for finding a
maximum-weight basis of a matroid.

By the classical theorem of Edmonds on matroids and the greedy algorithm
\cite{edmonds1971matroids}, the greedy algorithm returns a basis $G^\star$ maximizing
$\sum_{j\in B} w_j$ over all bases $B$ of $(E,\mathcal{I})$. By the equivalence proved above, the
same basis $G^\star$ minimizes $\sum_{j\in B} c_j$ over all bases. This proves the claim.

\end{proof}

\bibliographystyle{IEEEtran}
\bibliography{bibliography}

\end{document}